\numberwithin{equation}{section}
\theoremstyle{plain}
\newtheorem{theorem}{Theorem}[section]
\newtheorem{lemma}[theorem]{Lemma}
\theoremstyle{definition}
\theoremstyle{remark}
\newcommand{\Prob}{\mathsf{P}}
\newcommand{\Expect}{\mathsf{E}}
\DeclareMathOperator*{\esssup}{ess\,sup}
\begin{document}


\title{Robust Quickest Correlation Change Detection in High-Dimensional Random Vectors}

\author{
	\name{Assma Alghamdi, Taposh Banerjee\textsuperscript{a}\thanks{CONTACT: Taposh Banerjee. Email: taposh.banerjee@pitt.edu}, and Jayant Rajgopal}
	\affil{Department of Industrial Engineering, University of Pittsburgh}
}
\maketitle

\begin{abstract}
Detecting changes in high-dimensional vectors presents significant challenges, especially when the post-change distribution is unknown and time-varying. This paper introduces a novel robust algorithm for correlation change detection in high-dimensional data. The approach utilizes the summary statistic of the maximum magnitude correlation coefficient to detect the change. This summary statistic captures the level of correlation present in the data but also has an asymptotic density. The robust test is designed using the asymptotic density. The proposed approach is robust because it can help detect a change in correlation level from some known level to unknown, time-varying levels. The proposed test is also computationally efficient and valid for a broad class of data distributions. The effectiveness of the proposed algorithm is demonstrated on simulated data.  
\end{abstract}

\begin{keywords}
Robust correlation change detection; High dimensional vectors; Maximum magnitude sample correlation, Summary statistics.
\end{keywords}

\section{Introduction} \label{s:intro}
In data-driven decision-making, the real-time processing and analysis of datasets is crucial across various industries, from finance and engineering to healthcare and environmental monitoring. Given  the wealth of information available today, an important challenge is that of effectively detecting changes in underlying processes, particularly in scenarios where timely action is imperative. Recognizing these changes is paramount, and failure to do so could potentially have dire consequences, ranging from industrial disasters to healthcare crises. For example, effective monitoring and maintenance of industrial or production processes often rely on Statistical Process Control (SPC) charts, which are essential tools for detecting changes and facilitating timely corrective actions. To enhance these capabilities, SPC algorithms have been developed to optimize sample rates and sample sizes, thereby reducing the costs associated with sampling while maintaining effective monitoring (\cite{tagaras1998survey}). These advancements underscore the potential of process control charts and other SPC methods to improve monitoring efficiency in high-dimensional data environments. Section~\ref{sec:examples} below provides a few more motivating examples of change detection.  

In statistics, the problem of real-time change detection is studied under the topic of quickest change detection (\cite{veeravalli2014quickest, poor-hadj-qcd-book-2009, tart-niki-bass-2014, tart-book-2019, basseville1993detection, brodsky1993nonparametric}). When the data distributions before and after a change are known, several optimal algorithms have been developed to identify the changes. These algorithms employ a stopping rule, where a sequence of statistics based on the likelihood ratio of observations is calculated. A change is declared when an element in this sequence exceeds a predetermined threshold, which is set to control the rate of false alarms. Among the most notable algorithms are the Shiryaev algorithm (\cite{shiryaev1963}), which was later extended by  \cite{tartakovsky2005}, and the Cumulative Sum (CUSUM) algorithm introduced by \cite{page1955}. The CUSUM algorithm accumulates evidence for a change by summing the log-likelihood ratios of the observations and declares a change when this cumulative sum exceeds a certain threshold.

In modern applications of change detection in science and engineering, two major challenges are encountered: 1) data is high-dimensional, and 2) data is non-stationary. The latter means that the statistical characteristics of the data can change with time. Since it is challenging to learn the distribution of high-dimensional data, it is natural to detect the changes using first or second-order statistics, i.e., to detect the changes in the means and the correlation. 
There exist several approaches for detecting changes in the mean of a distribution (\cite{brodsky1993nonparametric}). In this paper, we address the more challenging problem of detecting a change in correlation in high-dimensional data. When the data distributions are unknown, it is shown in \cite{hou2023robust} and  \cite{unni-etal-ieeeit-2011} that a robust approach often leads to computationally tractable algorithms. Motivated by this, we study robust correlation change detection for high-dimensional data in this paper. 

Directly using estimates of correlation coefficients for change detection in high-dimensional data poses significant challenges (\cite{Fan2014, wainwright2019high}). Specifically, since the data is high-dimensional, we need a data sample at least as big as the dimension of the data to accurately estimate the correlation. This is not a feasible strategy for real-time change detection and will cause a significant delay in detection. This motivates us to utilize a summary statistic that captures the level of correlation 
in the data. We choose to use the maximum magnitude correlation coefficient for this purpose since this summary statistic is well-known to have an asymptotic density in the high-dimensional regime (\cite{bane-hero-tsp-2018, bane-hero-asilomar-2016}). This summary statistics was also used in \cite{bane-hero-tsp-2018} for change detection; however, the solution proposed therein does not apply to time-varying, unknown post-change distributions, and is not robust. 

The paper is organized as follows. After providing some motivating examples in Section~\ref{sec:examples}, we give a brief overview of the classical quickest change detection theory in Section~\ref{sec:ClassicalQCD}. In Section~\ref{sec:NonstationaryQCD}, we discuss the recent result from \cite{hou2023robust} on quickest change detection in non-stationary data. In Section~\ref{sec:SummaryStatAndRobustAlgo}, we discuss our proposed robust algorithm for correlation change detection in high-dimensional data. The asymptotic density of the summary statistic is discussed in Section~\ref{sec:asymDensity} and the change point model and robust algorithm are discussed in 
Section~\ref{sec:RobustAlgo}. In Section~\ref{sec:RobustOptimality}, we prove that the proposed test is indeed robust and optimal. Our proof relies on proving that the densities used in the test are least favorable in a well-defined sense. In Section~\ref{sec:numerical}, we provide applications to simulated datasets to show the robustness and accuracy of the proposed algorithm. Specifically, in Section~\ref{sec:fJaccuracy}, we show that the asymptotic density of the maximum magnitude correlation provides a good approximation to the normalized histogram. In Section~\ref{sec:robustnessOfTest}, we show that the proposed test is robust by applying it to non-stationary data.  In Section~\ref{sec:NonrobustTest}, we compare the performance of the proposed robust test with several non-robust tests. In Section~\ref{sec:nonparametric}, we show that a test based on the asymptotic density is better than a non-parametric test that detects the change in correlation based on a mean change detection algorithm.

\subsection{Some Motivating Applications}
\label{sec:examples}
In this section, we provide some motivating examples of change detection in science and engineering. For more background on change detection theory and its applications, we refer the readers to \cite{veeravalli2014quickest, poor-hadj-qcd-book-2009, tart-niki-bass-2014, tart-book-2019, basseville1993detection, brodsky1993nonparametric}. 
\begin{enumerate}
    \item \textit{Manufacturing Quality Control}:
    In manufacturing processes, ensuring product quality and consistency is significant. Robust correlation change detection can be utilized to monitor the correlation structure of various process parameters or sensor readings in real time. Sudden changes in correlations may indicate abrupt changes in the manufacturing process, allowing for timely intervention to prevent defects or downtime (\cite{Su2019}).

    \item \textit{Environmental Monitoring and Disaster Prevention}: In environmental monitoring applications, such as weather forecasting or natural disaster prediction (\cite{manogaran2018spatial}), detecting sudden changes in correlation patterns among environmental variables can aid in early warning systems and disaster prevention efforts. Correlation change detection can be utilized to monitor correlations among meteorological or environmental parameters, offering insights into potential weather anomalies or environmental hazards.
    
    \item \textit{Financial Market Monitoring}:
     In finance, the detection of sudden changes in market dynamics or trading patterns is crucial for mitigating risks and preventing financial crises. Robust correlation change detection can be employed to promptly identify anomalies in market data, such as unusual correlations between market conditions, asset prices, and volatility (\cite{frisen2009optimal}).

    \item \textit{Healthcare Monitoring and Disease Outbreak Detection}:
    In healthcare, the timely detection of disease outbreaks or abnormal health trends is critical for public health surveillance and response.  Robust correlation change detection can be employed to monitor correlations among various health indicators or disease-related variables, enabling the early identification of emerging health threats and facilitating targeted intervention strategies (\cite{frisen2009optimal}).
\end{enumerate}

\section{Classical Quickest Change Detection Formulation}
\label{sec:ClassicalQCD}
In this section, we briefly review Lorden's classical quickest change detection formulation and its optimal solution, the cumulative sum (CUSUM) algorithm. 
Consider a sequence of independent and identically distributed (i.i.d.) high-dimensional random vectors $\{\mathbf{X}(m)\}$. The vector $\mathbf{X}(m)$ represents our observation at time $m$. Initially, the statistical properties of the process are assumed to be normal and have probability density $f_0$. At a certain unknown point in time $\nu$ called the change point, the density of the vectors changes from $f_0$ to $f_1$: 
\begin{equation}
\label{eq:CPmodel_classical}
    \mathbf{X}(m) \sim \begin{cases}
        f_0, \quad m < \nu, \\
        f_1, \quad m \geq \nu.
    \end{cases}
\end{equation}
The problem of quickest change detection (QCD) is to observe the sequence $\{\mathbf{X}(m)\}$ in real-time and detect this change in density. The change must be detected with the minimum possible detection delay subject to a constraint on the rate of false alarms (\cite{tart-niki-bass-2014, tart-book-2019, poor-hadj-qcd-book-2009, veeravalli2014quickest}). Mathematically, our decision rule is a positive integer-valued random variable $\tau$ that represents the point in time when we declare the change. If after observing $m$ variables we believe that $\nu \leq m$, then the decision variable $\tau$ takes the value $m$. Thus,
$$
1 \leq \tau < \infty. 
$$
In the QCD problem, the variable $\tau$ must be selected to minimize a well-defined metric on the delay $\tau-\nu$. We must also avoid the event of false alarm $\{\nu > \tau\}$. 

If the change point is modeled as a random variable, then the QCD problem can be studied in the Bayesian formulation of \cite{shiryaev1963}. Since a prior distribution of the change point is often not available, we consider a non-Bayesian formulation where the change point $\nu$ is viewed as an unknown constant. The delay metric we consider is the worst-case average detection  delay (WADD) metric proposed by \cite{lord-amstat-1971}:
\begin{equation}\label{eq:lorden}
	\text{WADD}(\tau)  =  \sup_\nu \; \esssup \; \Expect_\nu\left[(\tau - \nu + 1)^+ | \mathbf{X}(1), \mathbf{X}(2), \dots, \mathbf{X}(\nu-1)\right]. 
\end{equation}
Here, $\Expect_\nu$ denotes the expectation conditioned on change occurring at time $\nu$, and $\text{ess sup}$ denotes the essential supremum of a random variable 
defined as follows: for any random variable $Y$,
$$
\text{ess sup}(Y) = \min\{C: \mathsf{P}(Y \leq C) = 1\}. 
$$
Thus, the delay metric WADD represents the worst-case average detection delay, conditioned on the worst possible realizations of observations before change. Since the change point $\nu$ is an unknown constant, assuming any fixed choice of the change point $\nu$ is unreasonable. Thus, in the WADD metric, the worst-case value of the change point is also chosen. The variable $\tau$ must be selected to minimize WADD$(\tau)$.  However, an additional penalty or constraint must be imposed on the rate of false alarms to avoid a trivial solution (for example, we can set $\tau=0$ to minimize WADD). The metric 
chosen for the rate of false alarms in \cite{lord-amstat-1971} is its inverse, the mean time to a false alarm (MFA):
\begin{equation}
\label{eq:MFA}
 \text{MFA}(\tau) = \mathsf{E}_\infty[\tau].
\end{equation}
Here $\mathsf{E}_\infty$ is the expectation conditioned on the fact that there is no change point (i.e., the change occurs at time infinity). The formulation studied by \cite{lord-amstat-1971} is the following:
\begin{equation}\label{eq:optim}
\begin{split}
    \min_\tau &\; \quad \text{WADD}(\tau), \\
    \text{subj. to}&\; \quad \text{MFA}(\tau) \geq \beta,
    \end{split}
\end{equation}
where $\beta > 0$ is a given desired lower bound on the MFA. Since the goal is to minimize the maximum of delay, this is often called a minimax formulation in the literature. The decision time $\tau$
must be chosen so that the change detection can be performed in real time. Thus, it is assumed that $\tau$ is a stopping time, that is, the event $\{\tau \leq m\}$ depends only on the observations up to time $m$. 

The optimal solution for this problem formulation is the popular cumulative sum (CUSUM) algorithm (\cite{page-biometrica-1954}). In this algorithm, we compute a sequence of statistics $\{W_m\}_{m \geq 0}$ as follows: start with $W_0=0$, and set
\begin{equation}
\label{eq:CUSUM}
    W_m = \max\left\{0, W_{m-1} + \log \frac{f_1(\mathbf{X}(m))}{f_0(\mathbf{X}(m))}\right\}.
\end{equation}
A change is declared the first time the statistic $W_m$ crosses a pre-defined threshold $A$:
\begin{equation}
    \tau_c = \min\{m \geq 1: W_m > A\}.
\end{equation}
The value of $A$ is chosen to control the rate of false alarms, i.e., to satisfy the constraint on the MFA in \eqref{eq:optim}. While this algorithm was proposed in \cite{page-biometrica-1954}, its optimality was established in \cite{lord-amstat-1971} and \cite{moustakides1986} many years later. It is also shown in \cite{lord-amstat-1971} that
setting $A=\log \beta$ is sufficient to guarantee that 
\begin{equation}
    \text{MFA}(\tau_c) \geq \beta. 
\end{equation}
Furthermore, for this choice of threshold,
\begin{equation}
    \text{WADD}(\tau_c) = \frac{\log \beta}{D(f_1 \; \| \; f_0)} (1 + o(1)), \quad \beta \to \infty. 
\end{equation}
Here $o(1) \to 0$ in the limit and $D(f_1 \; \| \; f_0)$ is the Kullback-Leibler (KL) divergence between the density $f_1$ and $f_0$ defined as 
\begin{equation}
    D(f_1 \; \| \; f_0) = \int_x f_1(x) \log \frac{f_1(x)}{f_0(x)} dx. 
\end{equation}
The KL divergence has the following properties: $D(f_1 \; \| \; f_0)  \geq 0$ and $D(f_1 \; \| \; f_0) = 0$ if, and only if, $f_1=f_0$. It is not symmetric and hence is not a true metric between densities. However, if the KL divergence is seen as a pseudo distance between the densities, then the above equation shows that the delay of the CUSUM algorithm depends inversely on this distance.

\section{Robust Quickest Change Detection in Non-Stationary Processes}
\label{sec:NonstationaryQCD}
The classical change point model given in \eqref{eq:CPmodel_classical} has two major limitations. First, in many engineering applications, while the pre-change density can be assumed to be fixed, the post-change density is often non-stationary, i.e., changes with time. Second, while the pre-change density can be assumed to be known, the post-change density is often unknown. Thus, a more realistic change point model is 
\begin{equation}\label{eq:CPmodel_robust}
\mathbf{X}(m) \sim
	\begin{cases}
		f_0, &\quad \forall m < \nu, \\
		f_{m, \nu}, &\quad \forall m \geq \nu.
	\end{cases}
\end{equation}
Thus, before the change point $\nu$, the random variables $\{\mathbf{X}(m)\}$ are i.i.d. with density $f_0$. After $\nu$, the variables are independent with $\mathbf{X}(m)$ with density $f_{m, \nu}$. Note that in this model, the density at time $m$ depends on the location of the change point $\nu$. The decision-maker knows $f_0$, but may not know $\{f_{m, \nu}\}$. 

We now discuss a robust solution to this problem studied in \cite{hou2023robust}. In \cite{hou2023robust}, the unknown density sequence $\{f_{m, \nu}\}$ is assumed to belong to a \textit{known} family of densities $\{\mathcal{P}_{m, \nu}\}$ such that
$$
f_{m, \nu} \in \mathcal{P}_{m, \nu}, \quad m,\nu = 1,2, \dots.
$$
We denote the set of all possible post-change density sequences by 
$$
\mathcal{G} = \{G = \{f_{m, \nu}\}: f_{m, \nu} \in \mathcal{P}_{m, \nu}, \; m, \nu \geq 1\}.
$$
With this notation, the robust problem formulation studied in \cite{hou2023robust} is
\begin{equation}\label{eq:RobustProblem}
\begin{split}
    \min_\tau &\; \quad \sup_{G \in \mathcal{G}} \; \text{WADD}^{G}(\tau), \\
    \text{subj. to}&\; \quad \text{MFA}(\tau) \geq \beta,
    \end{split}
\end{equation}
where $\text{WADD}^{G}(\tau)$ is the WADD delay measure when the post-change density sequence is $G=\{f_{n, \nu}\}$. It is proved in \cite{hou2023robust} that the optimal solution to the above problem is a robust version of the CUSUM algorithm in \eqref{eq:CUSUM}. 

To state this more precisely, we need some definitions. We say that a random variable $Z_2$ is stochastically larger than another random variable $Z_1$ if
$$
\Prob(Z_2 \geq t) \geq \Prob(Z_1 \geq t), \quad \forall t \in (-\infty, \infty). 
$$
We use the notation
$$
Z_2 \succ Z_1.
$$
If $\mathcal{L}_{Z_2}$ and $\mathcal{L}_{Z_1}$ are the probability laws of $Z_2$ and $Z_1$, then we also use the notation
$$
\mathcal{L}_{Z_2} \succ \mathcal{L}_{Z_1}
$$
to state that $Z_2 \succ Z_1$. 
We also use
$$
\mathcal{L}(\phi(\mathbf{X}), g)
$$
to denote the law of some function $\phi(\mathbf{X})$ of the random variable $\mathbf{X}$, when the variable $\mathbf{X}$ has density $g$.
We say that the family $\{\mathcal{P}_{m, \nu}\}$ is stochastically bounded by the sequence
	$$
	\bar{G}=\{\bar{f}_{m, \nu}\},
	$$
	and call $\bar{G}$ the least favorable law (LFL), if
	$$
	\bar{f}_{m, \nu} \in \mathcal{P}_{m, \nu}, \quad \nu, m=1,2, \dots, 
	$$
	and $\forall  f_{m, \nu} \in \mathcal{P}_{m, \nu},  m, \nu=1,2, \dots$,
	\begin{equation}
 \label{eq:stocbounded}
		\begin{split}
			\mathcal{L}\left(\log \frac{\bar{f}_{m, \nu }(\mathbf{X}(m))}{f_0(\mathbf{X}(m))}, f_{m, \nu}\right) &\succ 	\mathcal{L}\left(\log \frac{\bar{f}_{m, \nu}(\mathbf{X}(m))}{f_0(\mathbf{X}(m))},
			\bar{f}_{m, \nu}\right).
		\end{split}
	\end{equation}
If the LFL exists, then it has been proved in \cite{hou2023robust} that the CUSUM algorithm designed using the LFL sequence $\{\bar{f}_{m, \nu}\}$ is optimal. For simplicity of notation, we consider a special case (relevant to this paper) where
$$
\bar{f}_{m, \nu} = \bar{f}_1, \quad \forall m, \nu. 
$$
In this case, the robust solution simplifies to the following: the robust CUSUM statistic is 
\begin{equation}
\label{eq:CUSUM_robust}
    \bar{W}_m = \max\left\{0, \bar{W}_{m-1} + \log \frac{\bar{f}_1(\mathbf{X}(m))}{f_0(\mathbf{X}(m))}\right\}, \quad \bar{W}_0 = 0, 
\end{equation}
and the change is declared the first time the statistic $\bar{W}_m$ crosses a pre-defined threshold:
\begin{equation}
    \tau_{cr} = \min\{m \geq 1: \bar{W}_m > A\}.
\end{equation}
The threshold $A$ must be chosen to satisfy the constraint on the MFA. The threshold for the robust CUSUM algorithm can be designed similarly: setting $A=\log \beta$ is sufficient to guarantee that 
\begin{equation}
    \text{MFA}(\tau_{cr}) \geq \beta. 
\end{equation}

\section{Algorithm for Robust Quickest Correlation Change Detection}
\label{sec:SummaryStatAndRobustAlgo}
When the dimension of the variables $\{\mathbf{X}(m)\}$ is large, learning the pre- and post-change densities is not feasible. Thus, the CUSUM algorithm \eqref{eq:CUSUM} cannot be directly applied to the sequence $\{\mathbf{X}(m)\}$. When the dimension of the variables is large, it is also hard to identify uncertainty classes for the densities and identify a least favorable density sequence. Thus, even the robust CUSUM algorithm using the statistic in \eqref{eq:CUSUM_robust} cannot be directly applied to the sequence $\{\mathbf{X}(m)\}$. 

In the absence of information on the distribution, the next logical approach in the literature is often to detect changes using first-order (mean) and second-order (correlation) statistics. In this paper, we focus on the problem of correlation change detection. Since the detection must be done in real time and the data is high-dimensional, we cannot afford to use correlation or covariance estimation techniques followed by detection. Specifically, if the data is $p$-dimensional, we will need at least $p$ samples to reliably estimate the correlation, which adds a delay of $p$ samples to the test. 

This motivates us to follow the approach taken in \cite{bane-hero-tsp-2018}; here, a change in correlation is detected using the maximum magnitude correlation coefficient. This quantity also captures the level of correlation in the data. Under distributional and sparsity assumptions, an asymptotic density is obtained for the quantity. This density is then used for change detection. Our principal contribution in this paper is that we obtain a \textbf{robust} test for maximum correlation change detection using the above-mentioned asymptotic density. In the remainder of this section, we describe the asymptotic density and the robust test in detail. The asymptotic density is discussed in Section~\ref{sec:asymDensity}. The change point model for correlation detection and the proposed algorithm are discussed in Section~\ref{sec:RobustAlgo}. The robust optimality of the proposed algorithm for the problem formulation specified in \eqref{eq:RobustProblem} is proved in Section~\ref{sec:RobustOptimality}.

\subsection{Asymptotic Density for Maximum Magnitude Correlation}
\label{sec:asymDensity}


Let $\mathbb{X}$ be a $n \times p$ matrix of $n$ vectors $\mathbf{X}(1), \dots, \mathbf{X}(n)$, each of dimension $p$:  
$$\mathbb{X} = [\mathbf{X}_1, \cdots, \mathbf{X}_p] = [\mathbf{X}(1)^T, \cdots, \mathbf{X}(n)^T]^T,$$
where $\mathbf{X}_i = [X_{1i}, \cdots, X_{ni}]^T$ is the $i^{th}$ column containing the $i^{th}$ component of all the $n$ vectors and $\mathbf{X}(i) = [X_{i1}, \cdots, X_{ip}]$
is the $i^{th}$ row.
It is assumed that the vectors $\mathbf{X}(1), \dots, \mathbf{X}(n)$ are independent and have elliptically contoured distribution or
elliptical density (\cite{ander-mutlstat-1996}):
\begin{equation}\label{eq:matrixdensityexpr}
f_{\mathbf{X}}(\mathbf{x}) = h((\mathbf{X}-{\mathbf \mu} )^T\boldsymbol{\Sigma}^{-1}(\mathbf{X}-{\mathbf \mu} )^T),
\end{equation}
for shaping function $h$ on $\mathbb{R}^+$. The elliptical density family includes multivariate Gaussian as a special case with $h(x)=(2\pi)^{-k/2} \text{det}(\boldsymbol{\Sigma})^{-1/2}e^{-x/2}$. 

Define the $p \times p$ sample covariance matrix as
$$ \mathbf{S} = \frac{1}{n-1} \sum_{i=1}^n (\mathbf{X}(i) - \bar{\mathbf{X}})^T (\mathbf{X}(i) - \bar{\mathbf{X}}),$$
where $\bar{\mathbf{X}}$ is the sample mean of the $n$ rows of $\mathbb{X}$.
Define the sample correlation (coherency) matrix as
$$ \mathbf{R} = \mathbf{D_S}^{-1/2} \mathbf{S} \mathbf{D_S}^{-1/2},  $$
where $\mathbf{D_A}$ denotes the matrix obtained by zeroing out all but the diagonal elements of the matrix $\mathbf{A}$.
For any $i$ and $j$, $\mathbf{R}_{ij}$, the $ij$th element of the matrix $\mathbf{R}$, is the
sample correlation coefficient between the $i^{th}$ and $j^{th}$ columns of $\mathbb X$.

Define the maximum magnitude correlation coefficient as
\begin{equation}\label{eq:Sum_Stat_V}
V(\mathbb{X}):= \max_{1 \leq i \leq p} \max_{j \neq i} |\mathbf{R}_{ij}|.
\end{equation}
The following theorem is proved in \cite{bane-hero-tsp-2018}. For the theorem, we note that a matrix is called row sparse of degree $s$ if it has at most $s$ non-zero terms in each row with $s=o(p)$. A matrix is called block sparse 
of degree $s$, if there exists a row-column
permutation that puts it into block diagonal form with block size of size $s \times s$, where $s=o(p)$.

\begin{theorem}[\cite{bane-hero-tsp-2018}]\label{thm:CorrScr}
Let $\mathbf{\Sigma}$, the population dispersion matrix of the rows of $\mathbb{X}$, be row-sparse of degree $s=o(p)$. Let $p \to \infty$ and $\rho = \rho_p \to 1$ such that
$p (p-1)(1-\rho^2)^{(n-2)/2} \to e_{n} \in (0, \infty)$, where $e_n$ is a constant.  
Then:
\begin{enumerate}
\item As $p\rightarrow \infty$:
$$\Prob(V(\mathbb{X}) \geq \rho) \to 1-\exp(-\Lambda J),$$
where
$J$ is a fixed scalar depending on the distribution of the sample correlation matrix $\mathbf R$ 
and
$$\Lambda = \lim_{p\to \infty, \rho \to 1} \Lambda(\rho) = ((e_{n} a_n ) / (n-2) ),$$
with
$$\Lambda(\rho) = p {p-1 \choose 1} P_0(\rho),$$
$$P_0(\rho) = a_n \int_{\rho}^1 (1-u^2)^{\frac{n-4}{2}} du,$$
$$a_n = 2B((n-2)/2, 1/2) \mbox{ with } B(l,m) \mbox{ the beta function}. $$

\item If $\mathbf{\Sigma}$ is block sparse of degree $s$,
then
$$J=1 + O((s/p)^{k+1}),$$
and if $\mathbf{\Sigma}$ is diagonal then $J=1$.
\end{enumerate}
\end{theorem}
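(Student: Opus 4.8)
The plan is to establish the limiting law of $V(\mathbb X)$ through a Poisson approximation for the number of off-diagonal sample correlations that exceed $\rho$ in magnitude. Since $V(\mathbb X)\ge\rho$ exactly when this count is positive, it suffices to show the count is asymptotically Poisson with mean $\Lambda J$, and the Chen--Stein method is the natural tool here because the exceedance events are individually rare and only weakly dependent.

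First I would pass to \emph{U-scores}: after centering each of the $p$ columns of $\mathbb X$ and normalizing it to unit length, the sample correlation $\mathbf R_{ij}$ becomes the inner product $\langle U_i,U_j\rangle$ of two points $U_i,U_j$ on the unit sphere $S^{n-2}\subset\mathbb R^{n-1}$. Under the elliptical density \eqref{eq:matrixdensityexpr}, each $U_i$ is marginally uniform on $S^{n-2}$, and any pair with zero population correlation has the universal null density proportional to $(1-u^2)^{(n-4)/2}$; consequently the single-pair exceedance probability is exactly the spherical-cap quantity $P_0(\rho)=a_n\int_\rho^1(1-u^2)^{(n-4)/2}\,du$. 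Writing $N$ for the number of exceedances $\{|\mathbf R_{ij}|\ge\rho\}$, so that $\{V(\mathbb X)\ge\rho\}=\{N\ge1\}$, I would compute $\Expect[N]=\Lambda(\rho)$ and evaluate the cap integral as $\rho\to1$ via the endpoint expansion $1-u^2\approx2(1-u)$, giving $\int_\rho^1(1-u^2)^{(n-4)/2}\,du\sim(1-\rho^2)^{(n-2)/2}/(n-2)$ and hence $\Lambda(\rho)\to\Lambda=e_na_n/(n-2)$ under the critical scaling $p(p-1)(1-\rho^2)^{(n-2)/2}\to e_n$.

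Next I would invoke the Chen--Stein bound, with the dependency neighborhood of a pair taken to consist of the pairs sharing an index with it. For the diagonal case this yields $\Prob(N=0)\to e^{-\Lambda}$ directly, which is part (1) with $J=1$. For the general row-sparse statement I would split the pairs into ``null'' pairs with zero population correlation and the non-null pairs forced by the off-diagonal entries of $\boldsymbol{\Sigma}$: the null pairs reproduce the rate $\Lambda$, while the non-null pairs and the extra dependence they carry multiply the rate by the factor $J$, which is precisely a functional of the joint law of $\mathbf R$. Under block sparsity of degree $s$ there are only $O(ps)$ non-null pairs out of order $p^2$, and a careful accounting of their relative contribution gives $J=1+O((s/p)^{k+1})$, collapsing to $J=1$ in the diagonal case.

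The hard part will be the Chen--Stein step. Even when $\boldsymbol{\Sigma}$ is diagonal, the elliptical model couples all $p$ columns through the common per-row radii, so the correlations $\{\mathbf R_{ij}\}$ are not independent across disjoint index pairs, and in the row-sparse case the nonzero entries of $\boldsymbol{\Sigma}$ add further dependence. One therefore cannot literally declare disjoint pairs independent; instead one must bound the Chen--Stein quantities $b_1$ and $b_2$ and show the conditional-dependence term is negligible. Making this rigorous requires exploiting the exchangeability and concentration of the uniform U-scores, together with the row-sparsity $s=o(p)$, to keep the number of genuinely dependent pairs of order $p$ rather than $p^2$ so that the error terms vanish as $p\to\infty$. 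This dependence control is where essentially all of the technical effort lies.
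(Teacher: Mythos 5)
This theorem is not proved in the paper at all: it is imported verbatim from \cite{bane-hero-tsp-2018} (which in turn builds on the Hero--Rajaratnam correlation-screening theory), so the only meaningful comparison is with the proof in that source. Your skeleton --- the U-score representation, the spherical-cap null probability $P_0(\rho)$, the identity $\{V(\mathbb{X})\ge\rho\}=\{N\ge 1\}$ for the exceedance count $N$, a Poisson-type limit for $N$, the endpoint expansion giving $\Lambda(\rho)\to e_n a_n/(n-2)$, and sparsity driving the correction factor to $1$ --- is indeed the architecture of that proof.

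Two genuine gaps remain, and they sit exactly where the content of the theorem lies. First, your claim that each $U_i$ is \emph{exactly} uniform on $S^{n-2}$, and that null pairs have the \emph{exact} density proportional to $(1-u^2)^{(n-4)/2}$, fails under the stated model: the rows are i.i.d.\ elliptical, so each column has i.i.d.\ symmetric but generally non-Gaussian entries, and by Maxwell's theorem such a vector is spherically symmetric only in the Gaussian case. The cited proof does not rely on exact uniformity; there $J$ arises as the limit of averages of the \emph{joint} U-score density integrated over the coincidence region $\{v=\pm u\}$ (this is what ``$J$ depending on the distribution of $\mathbf{R}$'' means), so that non-Gaussianity, the dependence induced by the shared per-row elliptical radii, and the non-null pairs are all absorbed into $J$. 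Your reading of $J$ as a multiplier produced only by ``non-null pairs plus extra dependence,'' sitting on top of exact null caps, therefore does not reproduce part~1 as stated. Second, the Poisson-approximation step --- which you candidly flag as ``essentially all of the technical effort'' --- is deferred rather than executed, yet the rate $O((s/p)^{k+1})$ in part~2 (with $k$ the order of the nearest-neighbor statistic in the source, $k=1$ for the plain maximal correlation) is precisely the output of that analysis; in the source it is obtained by explicit first- and second-moment (inclusion--exclusion) bounds on joint U-score densities with quantitative error terms, rather than by Chen--Stein. A Chen--Stein route in the style of the Cai--Jiang coherence results is plausible, but until the $b_1$, $b_2$, and conditional-dependence terms are actually bounded under the elliptical, row-sparse model, what you have is an accurate outline of the known strategy rather than a proof.
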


Using Theorem~\ref{thm:CorrScr}, for large values of dimension $p$, the cumulative distribution function of $V(\mathbb{X})$ can be approximated by
\begin{equation}\label{eq:CDF_V}
\Prob(V(\mathbb{X}) \leq \rho) = \exp(-\Lambda(\rho) \; J), \; \quad \rho \in [0,1],
\end{equation}
where $\Lambda(\rho)$ is as defined in Theorem~\ref{thm:CorrScr}.
While this approximation is valid for large values of $\rho$, numerical experiments suggest (see Section~\ref{sec:numerical}) that the approximation is valid uniformly over $\rho$, as long as $n$ is small and $p\gg n$.

Taking derivatives on both sides we get an approximate density for the variable 
\( V(\mathbb{X}) \) in a parametric family parametrized by the parameter $J$:
\begin{equation}
\label{eq:PDF_V_delta1}
f_V(v; J) = \frac{C}{2} (1-v^2)^{\frac{n-4}{2}} J \exp\left(-\frac{C}{2} J \; T(v) \right), \; v \in (0,1].
\end{equation}
Here, if $B(l,m)$ is the beta function then,
\begin{equation}\label{eq:C_pndelta}
C = 2 p (p-1) B((n-2)/2, 1/2),
\end{equation}
and
\begin{equation}\label{eq:Trho}
T(v) = \int_{v}^1 (1-u^2)^{\frac{n-4}{2}} du
\end{equation}
is the incomplete beta function. We recall from the theorem above that if the covariance matrix $\mathbf{\Sigma}$ is diagonal then $J=1$. In the rest of the paper, while stating mathematical statements, we treat this approximate density as the true density of \( V(\mathbb{X}) \). We also remark that the density expression is valid for any $n \geq 5$. Finally, we note that in \cite{bane-hero-tsp-2018} and \cite{bane-hero-asilomar-2016}, the authors have also developed an asymptotic theory for correlation that goes well beyond the maximum magnitude correlation coefficient. Specifically, the $k^{th}$ order maximum magnitude correlation is also studied. The detection theory developed below can also be applied to these more general maximal coefficients. 

\subsection{Correlation Change Point Model and Robust Algorithm}
\label{sec:RobustAlgo}
Given a stochastic sequence of vectors $\{\mathbf{X}(\ell)\}$, our goal is to detect a change in the distribution of these vectors. Since the vectors are high-dimensional, we cannot learn the distribution. As mentioned above, our goal is to detect changes in the second-order statistics (correlation). The high dimensionality of the data again motivates us to seek summary statistics for the correlation rather than aim for detection after correlation estimation. The development in the previous section motivates us to choose the maximum magnitude correlation coefficient as our summary statistic. 

To detect the change using the maximum magnitude correlation coefficients, we collect the vectors $\{\mathbf{X}(\ell)\}$ in the form of non-overlapping $n \times p$ matrices $\{\mathbb{X}(m)\}$ and obtain the maximum magnitude correlation coefficient sequence $\{V(\mathbb{X}(m))\}$. In the rest of the paper, we use the notation
$$
V_m := V(\mathbb{X}(m)).  
$$

The change detection problem we solve in terms of the sequence $\{V_m\}$ is as follows:
\begin{equation}\label{eq:corrChgmodel}
	V_m \sim
	\begin{cases}
		f_V(v, 1), &\quad \forall m < \nu, \\
		f_V(v, J_m), &\quad \forall m \geq \nu.
	\end{cases}
\end{equation}
Here $f_V$ is the density given in \eqref{eq:PDF_V_delta1}. Thus, the problem we consider is that before the change, the variables are uncorrelated with the parameter $J=1$. After the change time $\nu$, the variables become correlated with an unknown and time-varying or non-stationary correlation level. This is captured in the model by selecting a time-varying $J$ parameter sequence $\{J_m\}$. This brings us to the QCD problem discussed in Section~\ref{sec:NonstationaryQCD}. 

Let us assume that there is a known $\bar{J} > 1$ such that 
\begin{equation}
    J_m \geq \bar{J} > 1. 
\end{equation}
We will show in Section~\ref{sec:RobustOptimality} that the following test is exactly robust optimal for the robust problem in \eqref{eq:RobustProblem}: start with $W_0=0$ and update $W_m$ using
\begin{equation}\label{eq:VRobustAlgo}
W_m = \max \left\{0, W_{m-1} + \log \frac{f_V(V_m, \bar{J})}{f_V(V_m, 1)}\right\}. 
\end{equation}
Again, a change is declared and an alarm is raised at
\begin{equation}
    \tau_{cc} = \left\{m\geq 1: W_m \geq A \right\}.
\end{equation}
Recall that setting $A=\log \beta$ will guarantee that $\text{MFA}(\tau_{cc}) \geq \beta$. Finally, when there is no clear choice for $\bar{J}$, one can select a putative value of $\bar{J} = 2$.

\section{Robust Optimality of the Proposed Correlation Change Detection Test}
\label{sec:RobustOptimality}
In this section, we prove that the proposed algorithm in the previous section in \eqref{eq:VRobustAlgo} is robust optimal for the problem formulation in \eqref{eq:RobustProblem}. We note that this optimality is valid under two assumptions: 1) The observation sequence is now $\{V_m\}$ and not $\{\mathbf{X}(\ell)\}$, and 2) the density $f_V(v, J)$ is a true density for the maximum magnitude correlation coefficient variable $V$. It is important to note that the optimality result proved in the following theorem is non-asymptotic, i.e., it is exact for any value of $\beta$ in \eqref{eq:RobustProblem}. Thus, it constitutes one of the first results on exact robust optimality for correlation change detection in a non-stationary setting. 

We need two supporting lemmas for our proof. 

\begin{lemma}\label{lem:unni}
    Suppose $\{U_i: 1 \leq i \leq n\}$ is a set of mutually independent random variables, and $\{V_i: 1 \leq i \leq n\}$ is another set of mutually independent random variables such that $U_i \succ V_i$, $1 \leq i \leq n$. Now let $q: \mathbb{R}^n \to \mathbb{R}$ be a continuous real-valued function defined on $\mathbb{R}^n$ that satisfies
    \begin{align*}
        q(X_1, \dots, X_{i-1}, a, X_{i+1}, \dots, X_{n}) \geq q(X_1, \dots, X_{i-1}, X_i, X_{i+1}, \dots, X_{n})
    \end{align*}
    for all $(X_1, \dots, X_n) \in \mathbb{R}^n$, $a > X_i$, and $i \in \{1, \dots, n\}$. Then we have 
    \begin{align*}
        q(U_1, U_2, \dots, U_n) \succ q(V_1, V_2, \dots, V_n).
    \end{align*}
\end{lemma}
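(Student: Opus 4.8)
The plan is to reduce the statement to a pure coupling argument and then peel off the coordinates one at a time. First I would recall the standard characterization of stochastic dominance: $U_i \succ V_i$ holds if and only if there exists a coupling, i.e.\ a probability space carrying random variables $\tilde U_i$ and $\tilde V_i$ with $\tilde U_i \stackrel{d}{=} U_i$, $\tilde V_i \stackrel{d}{=} V_i$, and $\tilde U_i \geq \tilde V_i$ almost surely. This is the Strassen-type quantile coupling, realized concretely by setting $\tilde U_i = F_{U_i}^{-1}(W_i)$ and $\tilde V_i = F_{V_i}^{-1}(W_i)$ for a common uniform $W_i$, where $F^{-1}$ denotes the generalized (right-continuous) inverse; monotonicity of the quantile function gives $\tilde U_i \geq \tilde V_i$ pointwise whenever $F_{U_i} \leq F_{V_i}$, which is exactly the dominance hypothesis.

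Next I would build the coupling jointly across all coordinates. Since $\{U_i\}$ are mutually independent and $\{V_i\}$ are mutually independent, I take independent uniforms $W_1, \dots, W_n$ and form $\tilde U_i = F_{U_i}^{-1}(W_i)$ and $\tilde V_i = F_{V_i}^{-1}(W_i)$. Then $(\tilde U_1, \dots, \tilde U_n) \stackrel{d}{=} (U_1, \dots, U_n)$ and $(\tilde V_1, \dots, \tilde V_n) \stackrel{d}{=} (V_1, \dots, V_n)$ as vectors (independence is preserved because the $W_i$ are independent), and on this space $\tilde U_i \geq \tilde V_i$ for every $i$ almost surely. The coordinatewise monotonicity hypothesis on $q$ then upgrades to joint monotonicity: starting from $q(\tilde V_1, \dots, \tilde V_n)$, I replace $\tilde V_1$ by $\tilde U_1$ (which is $\geq \tilde V_1$), then $\tilde V_2$ by $\tilde U_2$, and so on, each swap not decreasing the value of $q$ by the assumed inequality. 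After $n$ swaps I obtain $q(\tilde U_1, \dots, \tilde U_n) \geq q(\tilde V_1, \dots, \tilde V_n)$ almost surely. A pointwise inequality between two random variables immediately gives stochastic dominance of their laws, so $\mathcal{L}(q(\tilde U_1, \dots, \tilde U_n)) \succ \mathcal{L}(q(\tilde V_1, \dots, \tilde V_n))$, and since these laws equal $\mathcal{L}(q(U_1, \dots, U_n))$ and $\mathcal{L}(q(V_1, \dots, V_n))$ respectively, the conclusion follows.

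Two technical points need a little care. The hypothesis is stated with strict inequalities ($a > X_i$ and $U_i \succ V_i$), whereas the telescoping argument naturally wants $\tilde U_i \geq \tilde V_i$, including ties. I would handle this by noting that $q$ is assumed continuous, so the monotonicity inequality extends from the strict case $a > X_i$ to the non-strict case $a \geq X_i$ by taking a limit $a \downarrow X_i$; this closes the gap when $\tilde U_i = \tilde V_i$ on an event of positive probability. The second point is simply to confirm that the generalized inverse is monotone and that $F_{U_i} \leq F_{V_i}$ pointwise is genuinely equivalent to $\Prob(U_i \geq t) \geq \Prob(V_i \geq t)$ for all $t$, which is the definition used in the paper.

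I expect the main obstacle to be purely bookkeeping rather than conceptual: stating the quantile coupling cleanly enough that the joint distributional equalities (including the preservation of mutual independence within each family) are manifestly correct, and carefully passing from the strict-inequality hypothesis to the non-strict inequality needed for ties via continuity of $q$. The heart of the argument—the one-coordinate-at-a-time telescoping swap on the coupled space—is short once the coupling is in place.
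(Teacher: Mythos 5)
Your proof is correct, but note that the paper itself does not prove this lemma: the authors explicitly defer to the cited reference \cite{unni-etal-ieeeit-2011}, so there is no in-paper argument to compare against, and what you have produced is a self-contained proof of a result the paper treats as imported. Your route---realizing each dominance $U_i \succ V_i$ by the quantile coupling $\tilde U_i = F_{U_i}^{-1}(W_i)$, $\tilde V_i = F_{V_i}^{-1}(W_i)$ with independent uniforms $W_i$, telescoping through the coordinates using the coordinatewise monotonicity of $q$, and passing from the almost-sure inequality to dominance of the laws---is the standard Strassen-type argument and is sound: independence of the $W_i$ guarantees the coupled vectors have the correct joint laws (independent coordinates with the right marginals), and monotonicity of generalized inverses together with $F_{U_i} \leq F_{V_i}$ gives $\tilde U_i \geq \tilde V_i$ pointwise. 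Two small simplifications to your write-up: the worry about ties is unnecessary, since when $\tilde U_i = \tilde V_i$ the two arguments of $q$ are literally the same point, so the swap leaves $q$ unchanged and no limiting argument is needed; continuity of $q$ is really only needed so that $q(U_1, \dots, U_n)$ is a genuine (measurable) random variable. Also, the usual quantile function $F^{-1}(w) = \inf\{x : F(x) \geq w\}$ is left-continuous rather than right-continuous---a harmless slip that does not affect the argument.
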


\begin{lemma}
\label{lem:MLRorder}
    Let $f$ and $g$ be two probability density functions such that $f$ dominates $g$ in monotone likelihood ratio (MLR) order:
    $$
    \frac{f(x)}{g(x)} \; \; \uparrow \; \; x, 
    $$
    i.e., the likelihood ratio $\frac{f(x)}{g(x)}$ is monotonically increasing in the variable $x$. 
    Then, $f$ also dominates $g$ in stochastic order, i.e., $f \succ g$ or 
    $$
    \int_{x}^\infty f(y) dy \geq \int_x^\infty g(y) dy, \quad \forall x. 
    $$
\end{lemma}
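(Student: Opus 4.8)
The plan is to prove the standard implication ``MLR dominance $\Rightarrow$ stochastic dominance'' by exploiting the single-crossing structure that the monotone likelihood ratio forces on the difference $f-g$. The essential observation is that, since $r(x) := f(x)/g(x)$ is non-decreasing, the difference $f - g = (r-1)\,g$ changes sign at most once as $x$ increases, and any such change is from negative to positive. Concretely, I would set $x^* = \inf\{x : r(x) \geq 1\}$; then $f(x) \leq g(x)$ for $x < x^*$ and $f(x) \geq g(x)$ for $x > x^*$. This ``single crossing'' property is the structural heart of the argument.

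Next, I would introduce the tail-difference function
$$D(t) = \int_t^\infty \bigl(f(y) - g(y)\bigr)\, dy,$$
which is exactly the quantity whose non-negativity we must establish. Two boundary facts are immediate: because $f$ and $g$ are both densities, $\int_{-\infty}^\infty f = \int_{-\infty}^\infty g = 1$, so $D(-\infty) = 0$ and $D(+\infty) = 0$. Differentiating, $D'(t) = g(t) - f(t)$, so $D$ is non-decreasing on $(-\infty, x^*)$, where $g \geq f$, and non-increasing on $(x^*, \infty)$, where $f \geq g$. I would then conclude by a short monotonicity argument: on $(-\infty, x^*]$ the function $D$ rises from $D(-\infty) = 0$, so $D(t) \geq 0$ there; on $[x^*, \infty)$ it falls to $D(+\infty) = 0$, so $D(t) \geq 0$ there as well. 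Hence $D(t) \geq 0$ for all $t$, which is precisely $\int_t^\infty f \geq \int_t^\infty g$, i.e. $f \succ g$.

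The main obstacle is not analytic depth but care with the support and with degenerate cases. I need $r(x) = f(x)/g(x)$ to be well-defined, so I would restrict attention to the set where $g > 0$, interpret the ratio as $+\infty$ (consistent with $x > x^*$) at points where $g = 0 < f$, and handle points where both vanish separately. For the densities $f_V(\cdot, \bar{J})$ and $f_V(\cdot, 1)$ used in the paper, both are strictly positive on the common support $(0,1]$, so these edge cases do not arise and the ratio is genuinely monotone there.

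As a safeguard, I would also keep in reserve an equivalent route that avoids any differentiability assumption on $D$: split into the cases $t \geq x^*$ and $t < x^*$. For $t \geq x^*$, every $y \geq t$ satisfies $f(y) \geq g(y)$, so $D(t) \geq 0$ directly; for $t < x^*$, write $D(t) = -\int_{-\infty}^t (f-g)$ using $\int (f-g) = 0$, and observe that the integrand is non-positive on $(-\infty, t)$, again giving $D(t) \geq 0$. This purely sign-based version sidesteps differentiating $D$ and is the cleanest fallback if measure-theoretic regularity of the densities is a concern.
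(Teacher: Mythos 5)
Your proof is correct. Note, however, that the paper itself does not prove this lemma at all: it defers the proof to \cite{hou2023robust} and \cite{krishnamurthy2016partially}, so your argument supplies a self-contained proof where the paper gives only a citation. The single-crossing argument you use is the standard route (and essentially what appears in the cited references): since $r = f/g$ is non-decreasing, $f-g$ changes sign at most once, from negative to positive, at $x^* = \inf\{x : r(x) \geq 1\}$, and combined with $\int (f-g) = 0$ this forces the tail difference $D(t) = \int_t^\infty (f-g)$ to be non-negative everywhere. Your ``fallback'' version is in fact the cleanest and fully rigorous way to finish: for $t \geq x^*$ the integrand of $D(t)$ is non-negative a.e.\ on $(t,\infty)$, and for $t < x^*$ one writes $D(t) = -\int_{-\infty}^t (f-g)$ with a non-positive integrand; the differentiation version also goes through because $D$, being the integral of an $L^1$ function, is absolutely continuous, but it adds nothing over the sign-based argument. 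Your handling of the support issues (interpreting the ratio as $+\infty$ where $g = 0 < f$, and observing that both $f_V(\cdot,\bar{J})$ and $f_V(\cdot,1)$ are strictly positive on the common support $(0,1]$ so no degeneracy arises in the application) is appropriate. One minor remark: at the crossing point $x^*$ itself the pointwise inequality between $f$ and $g$ can go either way when the infimum is not attained, but this is a single point of measure zero and does not affect any of the integrals.
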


The proof of Lemma~\ref{lem:unni} can be found in \cite{unni-etal-ieeeit-2011} and that of Lemma~\ref{lem:MLRorder} can be found in \cite{hou2023robust} or \cite{krishnamurthy2016partially}. We now state and prove the main theorem of this paper.

\begin{theorem}
    If $J_m \geq \bar{J}$ for all $m$, then the density $f_V(v, \bar{J})$ is the LFL for the family $\{f_V(v, J_m)\}$, and therefore, the CUSUM algorithm in \eqref{eq:VRobustAlgo} is exactly robust optimal for the problem formulation in \eqref{eq:RobustProblem} when the change-point model is \eqref{eq:corrChgmodel}. 
\end{theorem}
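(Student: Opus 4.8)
The plan is to verify directly that $f_V(\cdot, \bar J)$ satisfies the defining stochastic-boundedness condition \eqref{eq:stocbounded} for the least favorable law, specialized to $f_0 = f_V(\cdot, 1)$, $\bar f_{m,\nu} = f_V(\cdot, \bar J)$, and an arbitrary post-change member $f_V(\cdot, J_m)$ with $J_m \ge \bar J$. Writing $\phi(v) := \log\frac{f_V(v, \bar J)}{f_V(v, 1)}$ for the log-likelihood ratio statistic that drives the robust CUSUM recursion \eqref{eq:VRobustAlgo}, condition \eqref{eq:stocbounded} collapses to the single-coordinate statement $\mathcal{L}(\phi(V), f_V(\cdot, J_m)) \succ \mathcal{L}(\phi(V), f_V(\cdot, \bar J))$. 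I would therefore reduce the claim to two ingredients: that $\phi$ is a monotone function of $v$, and that the sampling law $f_V(\cdot, J_m)$ is stochastically larger than $f_V(\cdot, \bar J)$.

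First I would compute $\phi$ explicitly from the density \eqref{eq:PDF_V_delta1}. The common factor $\frac{C}{2}(1-v^2)^{(n-4)/2}$ cancels in the ratio, leaving $\phi(v) = \log \bar J - \frac{C}{2}(\bar J - 1)\,T(v)$ with $T$ as in \eqref{eq:Trho}. Since $T'(v) = -(1-v^2)^{(n-4)/2} \le 0$ and $\bar J > 1$, the function $\phi$ is nondecreasing in $v$; this supplies the monotonicity I need. The same cancellation applied to two parameter values $J_2 > J_1$ gives $\log\frac{f_V(v, J_2)}{f_V(v, J_1)} = \log\frac{J_2}{J_1} - \frac{C}{2}(J_2 - J_1)\,T(v)$, again nondecreasing in $v$ because $T$ is decreasing. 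Hence the parametric family $\{f_V(\cdot, J)\}$ is ordered by monotone likelihood ratio in $J$, and Lemma~\ref{lem:MLRorder} converts this MLR order into the stochastic order $f_V(\cdot, J_m) \succ f_V(\cdot, \bar J)$ whenever $J_m \ge \bar J$.

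With both ingredients in hand, I would combine them using Lemma~\ref{lem:unni} in the scalar case $n=1$: with $q = \phi$ (continuous and nondecreasing, so its coordinatewise monotonicity hypothesis holds) and scalar variables $U \sim f_V(\cdot, J_m)$, $W \sim f_V(\cdot, \bar J)$ satisfying $U \succ W$, the lemma yields $\phi(U) \succ \phi(W)$, which is exactly \eqref{eq:stocbounded}. This establishes that $f_V(\cdot, \bar J)$ is the LFL for $\{f_V(\cdot, J_m)\}$. Finally, I would invoke the robust-optimality result of \cite{hou2023robust} recalled in Section~\ref{sec:NonstationaryQCD}: once the LFL exists and is the constant sequence $\bar f_1 = f_V(\cdot, \bar J)$, the CUSUM rule built from it, namely the statistic in \eqref{eq:VRobustAlgo}, solves the minimax problem \eqref{eq:RobustProblem} for the change model \eqref{eq:corrChgmodel}.

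The computations are short, so the only real subtlety, and where I would be most careful, is the sign bookkeeping that makes the two monotonicities point the same way: because $T$ is decreasing while the relevant parameter multiplies $-T$, both $\phi$ and the $J$-likelihood ratios come out increasing in $v$, and it is this alignment, rather than any delicate estimate, that powers the whole argument. I would also flag that the result rests on the two modeling assumptions stated before the theorem, namely that $f_V$ is treated as the true law of $V$ and that detection is reduced to the scalar summary sequence $\{V_m\}$; it is precisely this reduction that lets the one-dimensional specialization of Lemma~\ref{lem:unni} suffice and sidesteps any high-dimensional least-favorable construction.
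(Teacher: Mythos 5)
Your proposal is correct and follows essentially the same route as the paper's own proof: compute the log-likelihood ratio $\log\bar J - \frac{C}{2}(\bar J - 1)T(v)$, note it is increasing since $T$ is decreasing, establish $f_V(\cdot, J_m) \succ f_V(\cdot, \bar J)$ via the MLR order and Lemma~\ref{lem:MLRorder}, and then transfer the stochastic order through the monotone statistic using Lemma~\ref{lem:unni} to verify \eqref{eq:stocbounded}. The only difference is cosmetic ordering of the two ingredients, so nothing further is needed.
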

\begin{proof}
To prove this theorem, we need to verify the conditions in \eqref{eq:stocbounded}. 
    We note that
    \begin{equation}
        \frac{f_V(v, \bar{J})}{f_V(v, 1)} = \frac{\frac{C}{2} (1-v^2)^{\frac{n-4}{2}} \bar{J} \exp\left(-\frac{C}{2} \bar{J} \;T(v) \right)}{\frac{C}{2} (1-v^2)^{\frac{n-4}{2}}  \exp\left(-\frac{C}{2}  \;T(v) \right)}
        = \bar{J} \exp\left(-\frac{C}{2} (\bar{J}-1) \;T(v) \right).
    \end{equation}
    Taking the logarithm, we get
     \begin{equation}
       \log  \frac{f_V(v, \bar{J})}{f_V(v, 1)} = \log \bar{J} -\frac{C}{2} (\bar{J}-1) \;T(v).
    \end{equation}
We need to show that 
    \begin{equation}
        \mathcal{L}\left(\log \bar{J} -\frac{C}{2} (\bar{J}-1) \;T(V), \; f_V(v, J_m) \right) \succ  \mathcal{L}\left(\log \bar{J} -\frac{C}{2} (\bar{J}-1) \;T(V), \; f_V(v, \bar{J}) \right)
    \end{equation}
    To this end, note that $T(v) = \int_{v}^1 (1-u^2)^{\frac{n-4}{2}} du$ is decreasing in $v$. This means $\log \bar{J} -\frac{C}{2} (\bar{J}-1) \;T(v)$ is increasing in $v$. This function is also continuous. Hence due to the Lemma~\ref{lem:unni}, it is enough to show that  
    \begin{equation}
        \label{eq:JmsuccJbar}
        f_V(v, J_m) \succ f_V(v, \bar{J}).
    \end{equation}
    But, note that for any $J_m \geq \bar{J}$, the likelihood ratio
    \begin{equation}
        \frac{f_V(v, J_m)}{f_V(v, \bar{J})} = \frac{\frac{C}{2} (1-v^2)^{\frac{n-4}{2}} J_m \exp\left(-\frac{C}{2} J_m \;T(v) \right)}{\frac{C}{2} (1-v^2)^{\frac{n-4}{2}}  \bar{J}\exp\left(-\frac{C}{2} \bar{J} \;T(v) \right)}
        = \frac{J_m}{\bar{J}} \exp\left(-\frac{C}{2} (J_m-\bar{J}) \;T(v) \right)
    \end{equation}
    is monotonically increasing in $v$. This proves \eqref{eq:JmsuccJbar} and hence the theorem due to Lemma~\ref{lem:MLRorder}. 
    
\end{proof}

\section{Numerical Results}
\label{sec:numerical}

In this section, we show the effectiveness and robustness of our proposed test \eqref{eq:VRobustAlgo}. We use a multivariate Gaussian distribution to generate data from elliptically contoured density. By the central limit theorem, we expect that the proposed test will work well for a broad class of distributions, even if they are not in the elliptically contoured family.  



We generate the data following the methodology outlined in \cite{bane-hero-tsp-2018}, using $n=10$ and $p=100$. This choice of $n$ is arbitrary. However, note that the value of $n$ controls how the online stream of high-dimensional vectors is grouped into batches. While any $n \geq 5$ can be chosen in principle, a large value of $n$ will cause a large detection delay. 

We first generate covariance matrices by sampling from the Wishart distribution. We then make them row-sparse by amount $s=5$ using a pre-designed mask (the definition of row-sparsity is given before Theorem~\ref{thm:CorrScr}). We then generate samples from a multivariate Gaussian vector with the row-sparse covariance matrix. 



\subsection{Accuracy of Asymptotic Density}
\label{sec:fJaccuracy}
In this section, we show that the asymptotic density \eqref{eq:PDF_V_delta1} provides a good approximation to the true density of the maximum magnitude correlation coefficient. We generate multivariate Gaussian vectors using the methodology discussed above, group them into batches of size $n=10$, and calculate the value of the variable $V(\mathbb{X})$. For each covariance matrix, we generate $T=5000$ samples of $V$ and plot the normalized histogram to get an estimated density for $V$. 

We next estimate the parameter $J$ by using maximum likelihood estimation, i.e.,  by maximizing $\prod_{m=1}^{T} f_V(V_m;J)$, using (\ref{eq:PDF_V_delta1}). 
The log-likelihood function is given by
\begin{equation}\label{eq:log_likelihood}
\ell(J) = \sum_{m=1}^{T} \log f_V(V_m;J).
\end{equation}
Substituting for the density by the expression in (\ref{eq:PDF_V_delta1}) we have:
\begin{equation}\label{eq:density_substitution_MLE}
\ell(J) = \sum_{m=1}^{T} \log \left(\frac{C}{2} (1-(V_m^2)^{\frac{n-4}{2}} J \exp\left(-\frac{C}{2} J \;T(V_m) \right) \right).
\end{equation}
This simplifies to:
\begin{equation}\label{eq:density_substitution_MLE_simplified}
\ell(J) = T \log \left(\frac {C}{2}\right) + \sum_{m=1}^{T} \left(\frac {n-4}{2} \log (1-(V_m)^2) + \log (J) - \frac{C}{2} \; J \; T(V_m) \right).
\end{equation}
Differentiating $\ell(J)$ with respect to $J$ yields:
\begin{equation}\label{Differentiation}
\frac{d\ell(J)}{dJ} = \sum_{m=1}^{T} \left(\frac{1}{J} - \frac{2}{C} \; T(V_m) \right) = 0.
\end{equation}
Solving for $J$ we obtain:
\begin{equation}\label{solving_J}
J = \frac{1}{\frac{C}{2} \frac{1}{T} \sum_{m=1}^{T} T(V_m)}.
\end{equation}
Thus, the estimated value of the parameter $J$, given the $T$ samples $(V_1, \ldots, V_T)$ from the distribution $f_V (\cdot,J)$ is given by:
\begin{equation}\label{J_estimator}
\hat{J}(V_1, \ldots, V_T) = \frac{1}{\frac{C}{2} \frac{1}{T} \sum_{m=1}^{T} T(V(m))}. 
\end{equation}

The Wishart covariance matrix is resampled multiple times to get different values of the parameter $J$: $J = 1.0, \; 2.79, \; 3.62, \; 7.23, \text{ and } 11.12$. 
The normalized histograms and the plot for the density using the estimated $J$ values are then plotted together. The results are reported in 
Figure \ref{fig:histogram_vs_density}. 
The figures show that the density (\ref{eq:PDF_V_delta1}) indeed provides a good approximation to the normalized histogram. 

\begin{figure}
\centering \includegraphics[scale=0.8]{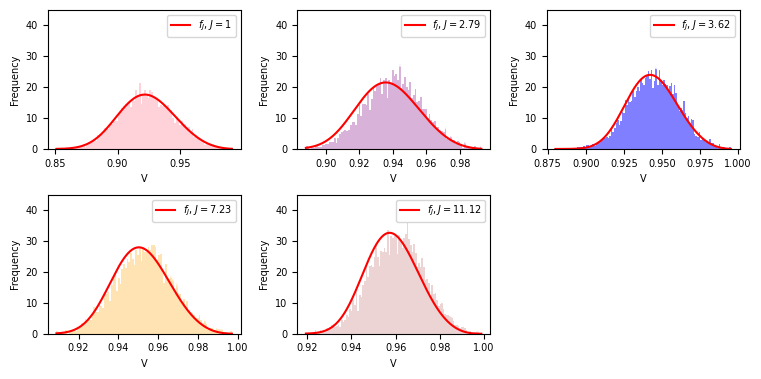}
\caption{Comparison between normalized histograms and asymptotic density in \eqref{eq:PDF_V_delta1} for $J$ = $1.0, \; 2.79, \; 3.62, \; 7.23, \text{and } 11.12$ respectively. } \label{fig:histogram_vs_density}
\end{figure}

\subsection{Robustness of the Proposed Test}
\label{sec:robustnessOfTest}
We now demonstrate the robustness of the proposed test \eqref{eq:VRobustAlgo} with $\bar{J}=2$ for non-stationary post-change data as illustrated in Figure ~\ref{fig:Non-stationary}. The data for the test is generated as follows. The first $500$ $V$ values are 
generated with $J_0=1$. The post-change $V$ values from 501 to 2000 are generated by varying $J$ values across different intervals. Specifically, after the change point at time step 500, the $V$ values in these intervals have estimated $J$ values as follows:  
\begin{itemize}
    \item from 500 to 800, shown in purple, $J=7.23$. 
    \item from 800 to 1300, shown in orange, $J=11.12$
    \item from 1300 to 1670, shown in blue,  $J=3.62$
    \item from 1670 to 2000, shown in teal,  $J=2.79$.
\end{itemize}

\begin{figure}
\centering
\includegraphics[width=0.75\linewidth, height=0.43\linewidth]{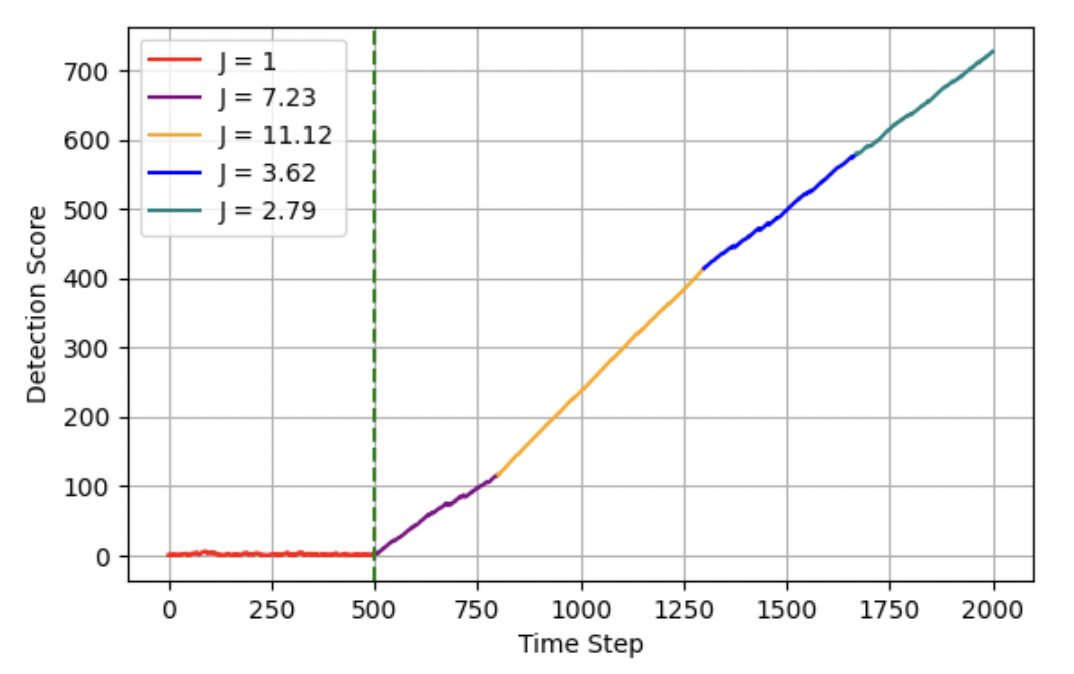}
\caption{\label{fig:Non-stationary}Robust test applied to non-stationary data with a change point at $500$. }
\end{figure}
Despite the non-stationarity in the post-change environment, the robust CUSUM test \eqref{eq:VRobustAlgo}, with $\bar{J}=2$, effectively detected the change. The evolution of the test statistic is plotted in Figure ~\ref{fig:Non-stationary}. Before the change, the statistic is close to zero. After the change, the statistic grows towards infinity, which can be detected using a threshold. Treating such a change in the drift as a measure of successful change detection is a standard approach in the quickest change detection literature (\cite{tart-niki-bass-2014, tart-book-2019, poor-hadj-qcd-book-2009, veeravalli2014quickest}).

\subsection{Comparison with Non-Robust Tests}
\label{sec:NonrobustTest}
In this section, we continue our discussion of the robustness of our proposed test \eqref{eq:VRobustAlgo} by comparing it with other algorithms that are not necessarily designed to be robust. Recall that our proposed robust test statistic is given by 
\begin{equation}\label{eq:VRobustAlgo_1}
W_m = \max \left\{0, W_{m-1} + \log \frac{f_V(V_m, \bar{J})}{f_V(V_m, 1)}\right\}. 
\end{equation}
Also, recall that the post-change $J$ parameters $\{J_m\}$ are all assumed to be greater than $\bar{J}$. We set $\bar{J}=2$. In what follows, a non-robust test uses the statistic
\begin{equation}\label{eq:VRobustAlgo_2}
W_m = \max \left\{0, W_{m-1} + \log \frac{f_V(V_m, J_1)}{f_V(V_m, 1)}\right\},
\end{equation}
where the parameter $J_1 \neq \bar{J}$ and in fact can be much bigger.  Both tests stop and declare a change when their corresponding $W_m$ statistic crosses a pre-defined threshold. 

In Figures ~\ref{fig:J2.79}, \ref{fig:J3.62}, \ref{fig:J7.23}, and \ref{fig:J11.12}, we compare the performance of robust and non-robust CUSUM algorithms. 
In each of the four figures, it can be seen that the robust CUSUM algorithm shows a clear increase in the detection score starting at \( t = 500 \), indicating effective change detection, while the non-robust CUSUM fails to detect the change consistently.

For Figures~\ref{fig:J2.79} and \ref{fig:J3.62}, the non-robust test is desinged using \( J_1 = 10 \). 
In Figure~\ref{fig:J2.79}, the post-change $V$ values have estimated $J=2.79$ while in Figure~\ref{fig:J3.62}, the post-change $V$ values have estimated $J=3.62$. As seen in these two figures, the statistics stay close to zero, even after the change, signifying a failure to detect the change. 
However, the statistic for Figure~\ref{fig:J3.62} shows fluctuations due to an increase in the variance. 

\begin{figure}
\centering
\includegraphics[width=0.95\linewidth, height=0.35\linewidth]{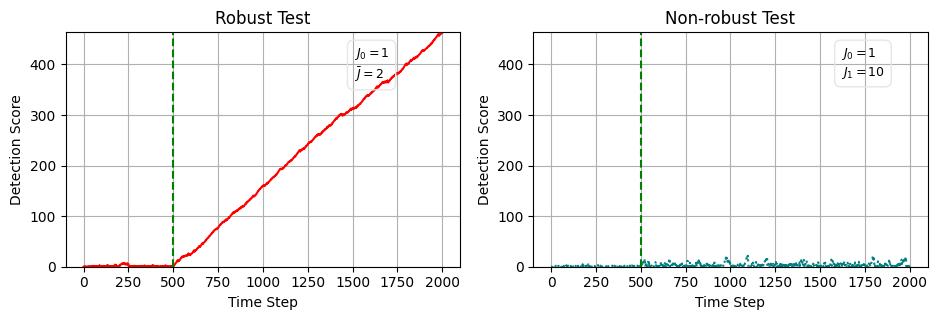}
\caption{Comparison between a robust and a non-robust CUSUM algorithm. The post-change data samples have an estimated  $J = 2.79$.}
\label{fig:J2.79}
\end{figure}

\begin{figure}
\centering
\includegraphics[width=0.95\linewidth, height=0.35\linewidth]{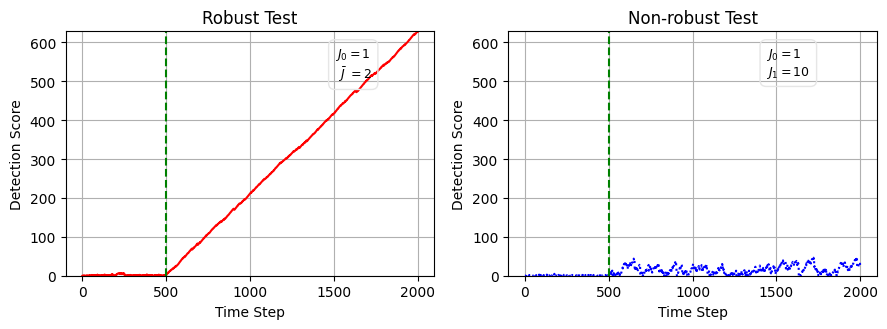}
\caption{Comparison between a robust and a non-robust CUSUM algorithm. The post-change data samples have an estimated $J = 3.62$.}
\label{fig:J3.62}
\end{figure}
In Figure~\ref{fig:J7.23}, the non-robust CUSUM algorithm is designed using parameter \( J_1 > 15 \) and the post-change data is generated using $J=7.23$. For Figure  \ref{fig:J11.12}, \( J_1 > 20 \), and the data is generated using $J=11.12$. Again, the statistics for the non-robust tests stay close to zero even after the change, but displays even greater variability with peaks around 150 and 200, respectively





\begin{figure}
\centering
\includegraphics[width=0.95\linewidth, height=0.35\linewidth]{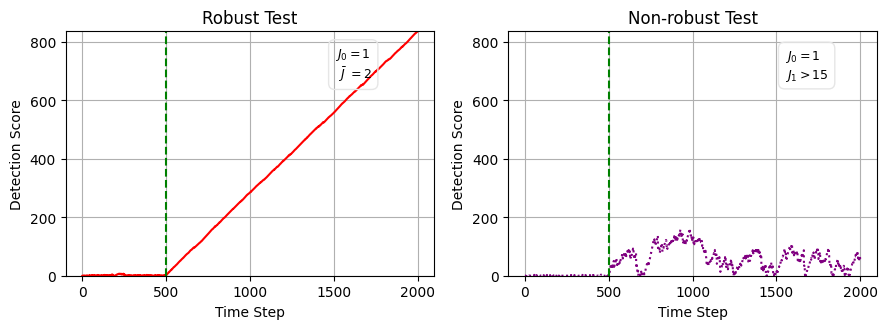}
\caption{Comparison between a robust and a non-robust CUSUM algorithm. The post-change data samples have an estimated $J = 7.23$.}
\label{fig:J7.23}
\end{figure}

\begin{figure}
\centering
\includegraphics[width=0.95\linewidth, height=0.35\linewidth]{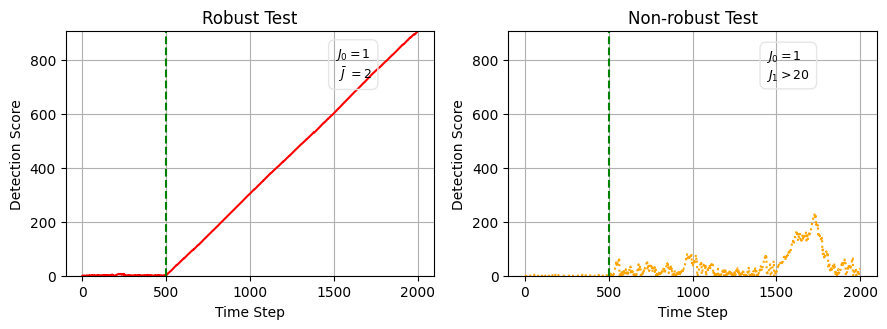}
\caption{Comparison between a robust and a non-robust CUSUM algorithm. The post-change data samples have an estimated $J = 11.12$.}
\label{fig:J11.12}
\end{figure}

Finally, Figure
\ref{fig:robust_nonrobust_logMFA_EDD} presents another comparison of the performance between the robust CUSUM test and various non-robust CUSUM tests. The performance metrics chosen are the WADD defined in \eqref{eq:lorden} vs the log of the mean time to a false alarm (log MFA) defined in \eqref{eq:MFA}. 
It is well-known (\cite{veeravalli2014quickest}) that for any stopping time based on the CUSUM-like statistic,
\begin{equation}
    \label{eq:WADDworstcase1}
    \text{WADD}(\tau) = \mathsf{E}_1[\tau]. 
\end{equation}
Thus, the worst case is attained at $\nu=1$. Note that this is not the case for arbitrary stopping times. 
The values in the figure are obtained by choosing different values of the threshold \(A\) and estimating the delay by setting the change point \(\nu = 1\) and simulating the test for 5000 sample paths. For all tests, the post-change distribution chosen corresponds to ${J}=2$ (the worst case). The mean time to false alarm values is estimated by simulating the test for 5000 sample paths as well by setting the change point $\nu=\infty$ (no change).

\begin{figure}
    \centering
    \includegraphics[scale=0.8]
    {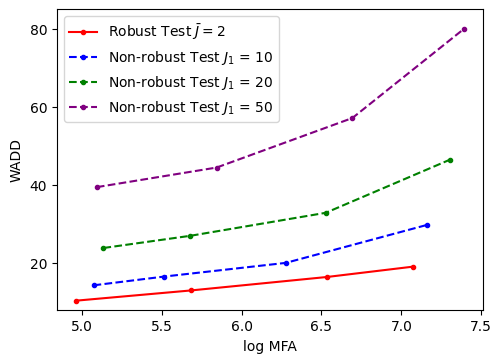}
    \caption{WADD vs. log MFA for robust and non-robust CUSUM tests. }
    \label{fig:robust_nonrobust_logMFA_EDD}
\end{figure}

The robust CUSUM test is represented by the solid red line. This test consistently exhibits lower WADD values across all log MFA ranges, indicating efficient detection performance with minimal delay. In contrast, the non-robust CUSUM tests are represented by the three dashed lines, each corresponding to a different post-change density assumption: \(J_1 = 10\) (blue), \(J_1 = 20\) (green), and \(J_1 = 50\) (purple). These non-robust tests show higher WADD values compared to the robust test, particularly as log MFA increases, which suggests slower and less efficient change detection.

\subsection{Comparison with a Non-parametric Test}
\label{sec:nonparametric}
In this section, we compare the optimal CUSUM test with a non-parametric test to show that the approximation provided by the asymptotic density \eqref{eq:PDF_V_delta1} is good enough to outperform a non-parametric test. The non-parametric test chosen 
is the standard non-parametric CUSUM test (\cite{brodsky1993nonparametric}): 
\[
W_m = \max\left\{ 0 , W_{m-1} + V_m - \frac{(m_0 + m_1)}{2} \right\} , \quad W_0 = 0
\]
where \( m_0 \) is the average of \( V \) before the change and \( m_1 \) is the average of \( V \) after the change. 

Figure \ref{fig:parametric_vs_nonparametric} compares the WADD between the parametric and non-parametric CUSUM algorithms as a function of $\log $(MFA). The parametric CUSUM algorithm uses \( J_1 = 3.62 \), which is the post-change sample estimated \( J \) value, rather than the designed robust test where ${\bar{J}} = 2$.  For the non-parametric CUSUM test, \( m_0 = 0.9117 \) and \( m_1 = 0.9467 \). The plot clearly shows that the non-parametric CUSUM algorithm has higher WADD values compared to the parametric CUSUM algorithm, indicating a delay in detection.

\begin{figure}
\centering
\includegraphics[scale=0.8] {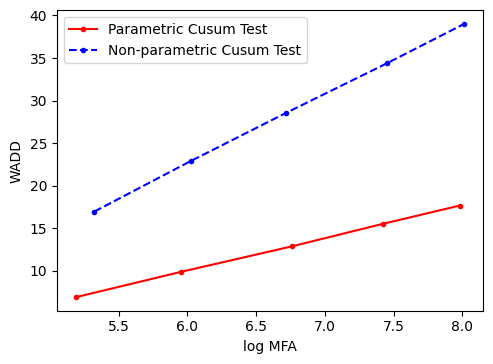}
\caption{WADD vs. log MFA for parametric and non-parametric CUSUM tests}
\label{fig:parametric_vs_nonparametric}
\end{figure}

\section{Conclusion}
In this paper, we have presented a novel robust algorithm for detecting changes in the correlation structure of high-dimensional data. The post-change correlation structure is allowed to be non-stationary (time-varying). We prove that the proposed algorithm is exactly robust optimal for detecting the non-stationary change, for every fixed constraint on the mean time to a false alarm (MFA). This is in contrast with many other tests in the literature that are only proved to be asymptotically optimal, as the MFA goes to infinity. 
We have also shown the effectiveness and robustness of our algorithm by applying it to simulated non-stationary data, and we have also shown that the proposed test outperforms non-parametric and non-robust tests.


\bibliographystyle{tfcad}
\bibliography{sources.bib, Taposh_QCD.bib}

\end{document}